\documentclass[12pt]{amsart}
\usepackage{amssymb}
\usepackage{color}
\usepackage{amsmath,epic,curves,amscd}
\usepackage[english]{babel}
\usepackage{graphicx}
\usepackage{comment}
\usepackage{appendix}
\usepackage{mathdots}
\usepackage[all]{xy}
\usepackage{mathtools}
\usepackage{lscape}
\usepackage{stmaryrd}
\usepackage{mathabx}
\newtheorem{claim}{}[section]
\newtheorem{theorem}[claim]{Theorem}

\newtheorem{lemma}[claim]{Lemma}
\newtheorem{proposition}[claim]{Proposition}

\theoremstyle{remark}

\theoremstyle{example}

\renewenvironment{proof}{\noindent{\it Proof. \hskip0pt}}
                      {$\square$\par\medskip}
\allowdisplaybreaks \textwidth 15.5 true cm \textheight 24.0 true cm
\usepackage{color}

\begin{document}
\baselineskip 6.0 truemm
\parindent 1.5 true pc

\newcommand\lan{\langle}
\newcommand\ran{\rangle}
\newcommand\tr{\operatorname{Tr}}
\newcommand\ot{\otimes}
\newcommand\ttt{{\text{\sf t}}}
\newcommand\rank{\ {\text{\rm rank of}}\ }
\newcommand\choi{{\rm C}}
\newcommand\choid{{\rm D}}
\newcommand\dual{\star}
\newcommand\flip{\star}
\newcommand\cp{{{\mathbb C}{\mathbb P}}}
\newcommand\ccp{{{\mathbb C}{\mathbb C}{\mathbb P}}}
\newcommand\pos{{\mathcal P}}
\newcommand\tcone{T}
\newcommand\mcone{K}
\newcommand\superpos{{{\mathbb S\mathbb P}}}
\newcommand\blockpos{{{\mathcal B\mathcal P}}}
\newcommand\jc{{\text{\rm JC}}}
\newcommand\dec{{\mathbb D}{\mathbb E}{\mathbb C}}
\newcommand\decmat{{\mathcal D}{\mathcal E}{\mathcal C}}
\newcommand\ppt{{\mathcal P}}
\newcommand\pptmap{{\mathbb P}{\mathbb P}{\mathbb T}}
\newcommand\join{\vee}
\newcommand\meet{\wedge}
\newcommand\alg{{\text{\rm alg}}}
\newcommand\pr\prime
\newcommand\id{{\text{\rm id}}}
\newcommand\calm{{\mathcal M}}
\newcommand\calb{{\mathcal B}}
\newcommand\cala{{\mathcal A}}
\newcommand\cale{{\mathcal E}}
\newcommand\calf{{\mathcal F}}
\newcommand\calp{{\mathcal P}}
\newcommand\calt{{\mathcal T}}
\newcommand\cals{{\mathcal S}}
\newcommand\call{{\mathcal L}}
\newcommand\calr{{\mathcal R}}
\newcommand\caln{{\mathcal N}}
\newcommand\calcb{{\mathbb{CB}}}
\newcommand\calcp{{\mathbb{CP}}}
\newcommand\calnl{{\mathcal N\mathcal L}}
\newcommand\ad{{\text{\rm Ad}}}
\newcommand\ro{{\boxtimes}}
\newcommand\algmm{{\rm alg}(\calm^\pr,\calm)}
\newcommand\calhs{{\mathcal{HS}}}
\newcommand\rk{{\rm rank}\,}
\newcommand\e{\varepsilon}
\newcommand\conv{{\text{\rm conv}\,}}
\newcommand\re{{\text{\rm Re}}\,}
\newcommand\xxxx{\medskip\par ===============================\par\medskip}
\newcommand\im{{\text{\rm Im}}}
\newcommand\spa{{\text{\rm span}}\,}
\newcommand\cii{\circle*{0.4}}
\newcommand\inte{{\text{\rm int}\,}}

\title{Bi-qutrit entangled edge states of positive partial transposes with largest ranks}

\author{Kyung Hoon Han and Seung-Hyeok Kye}
\address{Kyung Hoon Han, Department of Data Science, The University of Suwon, Gyeonggi-do 445-743, Korea}
\email{kyunghoon.han at gmail.com}
\address{Seung-Hyeok Kye, Department of Mathematics and Institute of Mathematics, Seoul National University, Seoul 151-742, Korea}
\email{kye at snu.ac.kr}

\subjclass{81P40, 81P42, 15A30, 46N50}

\keywords{quantum states of positive partial transposes,
entanglement, PPT edge states, faces}
\thanks{partially supported by NRF-RS-2025-23323242, Korea.}

\begin{abstract}
Whenever $E$ is an eight dimensional subspace of the bi-qutrit
quantum system whose orthogonal complement is spanned by a vector of
Schmidt rank three, we show that there exist PPT entangled edge
states with the range space $E$ whose partial transposes are of rank
six, which is the largest possible rank. In this way, we
exhibit a huge family of bi-qutrit PPT entangled edge
states of type $(8,6)$. They make faces of the convex set of all PPT
states, and we find bi-qutrit PPT entangled edge states of
other types on the boundaries of such faces.
\end{abstract}
\maketitle

\section{Introduction}

The notion of entanglement is now considered as one of the key
resources in the current quantum information theory, and
distinguishing entanglement from separability is one of the main
research topics. It is trivial to exhibit all separable states just
by definition, but it is very difficult to determine if a given
state is separable or entangled. The notion of positive partial
transposes introduced by Woronowicz \cite{woronowicz} in the 1970's
gives rise to a powerful necessary condition for separability; every
separable state is of PPT \cite{{choi-ppt},{peres}}. It is easy to
check if a given state is of PPT or not, but it is not so easy to
exhibit entangled PPT states. In fact, the whole structure of the
convex set $\ppt$ of all  PPT states is far from being understood
even in the low dimensional cases like $2\ot 4$ and $3\ot 3$
systems. In the $2\ot 2$ and $2\ot 3$ cases, it is known that the
notion of PPT coincides with separability
\cite{{stormer},{woronowicz},{horo-1}}, and their convex structures
can be found in \cite{{choi_kye},{kye_ritsu}}. PPT states themselves
play important roles in current quantum information theory. See
\cite{{halder},{horo-distill},{huber}} for examples.

The notion of PPT entangled edge states \cite{lkhc} is a key notion
to understand the convex structures of the convex set $\ppt$. Recall
that a PPT state $\varrho$ is called a PPT edge state, or just an
edge state, if the smallest face of $\ppt$ containing $\varrho$ has
no separable state. Then every PPT state is the sum of
a separable state and an edge state. Especially, every
extreme point of $\ppt$ is a pure separable state or an edge state.
Because every face of $\ppt$ is determined by a pair of subspaces
given by the images of a PPT state and its partial transpose, it is
natural to classify PPT edge states by ranks of themselves and their
partial transposes. A PPT state $\varrho$ is called of type $(h,k)$
when $\rk\varrho=h$ and $\rk\varrho^\Gamma=k$, where
$\varrho^\Gamma$ denotes the partial transpose of $\varrho$.

Lower bounds and upper bounds for possible types of edge states have
been discussed in \cite{{krausCKL_2000},{hlvc}} and
\cite{kiem_kye_11}, respectively. See \cite{kiem-multi} for
multi-partite cases. In case of $3\ot 3$ system, the early examples
of PPT entanglement by Choi \cite{choi-ppt} and St\o rmer
\cite{stormer82} in the early 1980's turn out to be PPT edge states of
types $(4,4)$ and $(7,6)$, respectively. Possible types for $3\ot 3$
edge states are listed as follows;
$$
(4, 4),\quad (5,5),\quad (6,5),\quad (6,6),\quad (7,5),\quad
(7,6),\quad  (8,5),\quad (8,6),
$$
where we consider the cases $\rk\varrho\ge\rk\varrho^\Gamma$ by
symmetry \cite{{kiem_kye_11},{kye_osaka}}. See {\sc Figure 1}.
\begin{figure}
\begin{center}
\setlength{\unitlength}{.5 truecm}
\begin{picture}(8,8)
\put(0,1){\line(1,0){7}} \put(1,0){\line(0,1){7}}
\put(1,4){\line(1,0){6}} \put(4,1){\line(0,1){6}}
\put(1,7){\line(1,0){6}} \put(7,1){\line(0,1){6}}
\put(7,3){\line(-1,1){4}} \put(2,2){\cii} \put(3,3){\cii}
\put(4,3){\cii} \put(5,3){\cii} \put(6,3){\cii} \put(3,4){\cii}
\put(4,4){\cii} \put(5,4){\cii} \put(6,4){\cii} \put(3,5){\cii}
\put(3,6){\cii} \put(4,5){\cii} \put(4,6){\cii} \put(1.1,0.2){$3$}
\put(3.9,0.2){$6$} \put(6.9,0.2){$9$}
\end{picture}
\end{center}
\caption{This picture shows the lower bounds $\rk\varrho\ge 4$ and
$\rk\varrho^\Gamma\ge 4$, together with the upper bound
$\rk\varrho+\rk\varrho^\Gamma\le 14$ for ranks of  $3\ot 3$ PPT
entangled edge states and their partial transposes.}
\end{figure}
Bi-qutrit PPT edge states of type $(4,4)$ have been studied
extensively. In fact, it is known \cite{{LMS_2010},{chen},{sko},{SLM_2011},{chen_2012}} that every $3\ot 3$ PPT entangled state of rank
four is locally equivalent to that arising from unextendible product
basis \cite{{bdmsst},{dmsst}}. See also \cite{chen_2013} for general
$n\ot n$ cases. Nevertheless, examples of $3\ot 3$ PPT
edge states of other types are rather sporadic through the
literature \cite{{p-horo-range},{ha-kye-2},{clarisse},{ha-3},{agkl},{kye_osaka},{hansen_Myrh_2017}}.
See \cite{kye_ritsu} for a survey on the topics.

In order to understand the convex geometry of PPT states, we turn
our attention in this paper to PPT entangled edge
states of largest ranks. For a given subspace $E\subset \mathbb
C^m\ot\mathbb C^n$, the set
$$
{\mathcal F}_E:=\{\varrho\in \mathcal D_{m\otimes n}:
\im\varrho\subseteq E\}
$$
is a face of the convex set ${\mathcal D}_{m\otimes n}$ of all density
matrices in $M_m\ot M_n$, and every face of the convex set
${\mathcal D}_{m\otimes n}$ arises in this way, where $\im\varrho$
denotes the range space of $\varrho$. The (relative) interior of
${\mathcal F}_E$ with respect to the affine space
generated by ${\mathcal F}_{E}$ is given by
$$
\inte {\mathcal F}_E=\{\varrho\in {\mathcal D}_{m\otimes n}:
\im\varrho= E\}.
$$
It is known
\cite{{kiem_kye_11},{kye_ritsu},{choi_kiem_kye_2020}} that the
largest possible rank of an $n\ot n$ PPT edge state $\varrho$ is
$n^2-1$, and  the largest possible rank of $\varrho^\Gamma$ is
$n^2-(2n-3)$ for such states. Therefore, such a state should be
located on the interior of a maximal face of ${\mathcal D}_{n \ot
n}$.

The main purpose of this paper is to exhibit $3\ot 3$ PPT entangled
edge states of type $(8,6)$ with the range space $E$, whenever
$E^\perp$ is spanned by a vector of Schmidt rank three. Note that vectors of Schmidt rank three
cover generic cases among vectors in $\mathbb C^3\ot\mathbb C^3$.
After we suggest in the next section general forms of $3\ot 3$ PPT
states which fit our purpose, we find $3\ot 3$ PPT edge states
of type $(8,6)$ among them in Section 3, and consider the faces of the
convex set $\mathcal P$ determined by PPT states we found.
In Section 4, we find PPT edge states of other types
on the boundary of the face. In the last section, we discuss the cases
when $E^\perp$ is spanned by a vector of Schmidt rank one or two, together with several
questions.  Throughout this paper, we use the term \lq\lq state\rq\rq\ even for unnormalized states.

\section{Constructions}

In order to construct bi-qutrit PPT states of rank eight,
we begin with a kernel vector $|\zeta\ran\in\mathbb C^3\ot\mathbb C^3$
whose Schmidt rank is three.
We may assume without loss of generality that $|\zeta\ran=\sum_{i=0}^2 a_i|i\ran|i\ran$
with
\begin{equation}\label{a_i_cond}
\sum_{i=0}^2 a_i^2=1,\qquad a_i>0,\quad i=0,1,2,
\end{equation}
by Schmidt decomposition. Then it is natural to consider
$3\times 3$ rank two matrices
$$
A_{d,\alpha}:=
\left(\begin{matrix}
d_0 &\alpha_2 &\bar\alpha_1\\
\bar\alpha_2 &d_1 &\alpha_0\\
\alpha_1 &\bar\alpha_0 &d_2
\end{matrix}\right)
$$
with the kernel vector $(a_0,a_1,a_2)^\ttt\in\mathbb C^3$, which
will be the $3\times 3$ principal submatrices of $\varrho\in M_3\ot M_3$ with
the nonzero entries at $|00\ran$, $|11\ran$ and $|22\ran$. Considering the
partial transposes, it is also natural to look for PPT states among states
with the following form
\begin{equation}\label{varrho}
\varrho= \left(
\begin{array}{ccccccccccc}
d_0     &\cdot   &\cdot  &&\cdot  &\alpha_2     &\cdot   &&\cdot   &\cdot  &\bar\alpha_1     \\
\cdot   &p_2 &\cdot    &&\bar\beta_2    &\cdot   &\cdot &&\cdot &\cdot     &\cdot   \\
\cdot  &\cdot    &q_1 &&\cdot &\cdot  &\cdot    &&\beta_1    &\cdot &\cdot  \\
\\
\cdot  &\beta_2    &\cdot &&q_2 &\cdot  &\cdot    &&\cdot    &\cdot &\cdot  \\
\bar\alpha_2     &\cdot   &\cdot  &&\cdot  &d_1     &\cdot   &&\cdot   &\cdot  &\alpha_0     \\
\cdot   &\cdot &\cdot    &&\cdot    &\cdot   &p_0 &&\cdot &\bar\beta_0    &\cdot   \\
\\
\cdot   &\cdot &\bar\beta_1   &&\cdot    &\cdot   &\cdot &&p_1 &\cdot    &\cdot   \\
\cdot  &\cdot    &\cdot &&\cdot &\cdot  &\beta_0    &&\cdot    &q_0 &\cdot  \\
\alpha_1     &\cdot   &\cdot  &&\cdot  &\bar\alpha_0     &\cdot &&\cdot   &\cdot  &d_2
\end{array}
\right)\in M_3\ot M_3.
\end{equation}
Here, $\alpha=(\alpha_0,\alpha_1,\alpha_2)$ and $\beta=(\beta_0,\beta_1,\beta_2)$ are taken from $\mathbb C^3$,
and $d=(d_0,d_1,d_2)$, $p=(p_0,p_1,p_2)$ and $q=(q_0,q_1,q_2)\in \mathbb R^3_+$
with $\mathbb R_+=[0,\infty)$.
We denote
$$
B_{p_i,q_i,\beta_i}:=\left(\begin{matrix}
p_i &\bar\beta_i\\
\beta_i &q_i
\end{matrix}\right),\quad i=0,1,2,
$$
to express $2\times 2$ submatrices of $\varrho$.

We take $d=\alpha=(1,1,1)$ to recover PPT edge states by Choi
\cite{choi-ppt} and St\o rmer \cite{stormer82} of types $(4,4)$ and
$(7,6)$, respectively. We also note that examples of PPT edge states
of various types in \cite{{ha-kye-2},{kye_osaka}} are in the form
(\ref{varrho}). Furthermore, Werner states \cite{Werner-1989} and
isotropic states \cite{terhal-sghmidt} are in this form, together
with their variations in \cite{han_kye_2025_b}. The above matrices
arise as the Choi matrices of various kinds of positive linear maps
on $3\times 3$ matrices. See
\cite{{choi75},{choi-lam},{tomiyama-83},{tom_85},{cho-kye-lee}} for
early such examples. We also note that they are local diagonal
orthogonal invariant matrices
\cite{{nechita_singh_2021},{singh_nechita2001}}. See also
\cite{{park_youn_2024},{nechita_park_2025},{gulati_nechita_park}} in
this direction.

The partial transpose of $\varrho$ is given by
$$
\varrho^\Gamma= \left(
\begin{array}{ccccccccccc}
d_0     &\cdot   &\cdot  &&\cdot  &\beta_2     &\cdot   &&\cdot   &\cdot  &\bar\beta_1     \\
\cdot   &p_2 &\cdot    &&\bar\alpha_2    &\cdot   &\cdot &&\cdot &\cdot     &\cdot   \\
\cdot  &\cdot    &q_1 &&\cdot &\cdot  &\cdot    &&\alpha_1    &\cdot &\cdot  \\
\\
\cdot  &\alpha_2    &\cdot &&q_2 &\cdot  &\cdot    &&\cdot    &\cdot &\cdot  \\
\bar\beta_2     &\cdot   &\cdot  &&\cdot  &d_1     &\cdot   &&\cdot   &\cdot  &\beta_0     \\
\cdot   &\cdot &\cdot    &&\cdot    &\cdot   &p_0 &&\cdot &\bar\alpha_0    &\cdot   \\
\\
\cdot   &\cdot &\bar\alpha_1   &&\cdot    &\cdot   &\cdot &&p_1 &\cdot    &\cdot   \\
\cdot  &\cdot    &\cdot &&\cdot &\cdot  &\alpha_0    &&\cdot    &q_0 &\cdot  \\
\beta_1     &\cdot   &\cdot  &&\cdot  &\bar\beta_0     &\cdot
&&\cdot   &\cdot  &d_2
\end{array}
\right)\in M_3\ot M_3.
$$
Then we note that
$\varrho$ is of PPT if and only if
$$
A_{d,\alpha}, A_{d,\beta}\ge 0,
\qquad
B_{p_i,q_i,\alpha_i}, B_{p_i,q_i,\beta_i}\ge 0,\quad i=0,1,2,
$$
where $A\ge 0$ means that $A$ is positive (semi-definite).

We proceed to look for $\alpha_i=-b_i+{\rm i}c_i$ and $d=(d_0,d_1,d_2)$ such that $A_{d,\alpha}$ is a $3\times 3$
positive matrix of rank two  with the kernel vector $a=(a_0,a_1,a_2)^\ttt$. Looking at the imaginary parts of the equation
\begin{equation}\label{equation}
\left(\begin{matrix}
d_0 & -b_2+{\rm i}c_2 & -b_1-{\rm i}c_1\\
-b_2-{\rm i}c_2 &d_1 &-b_0+{\rm i}c_0\\
-b_1+{\rm i}c_1 & -b_0-{\rm i}c_0 &d_2
\end{matrix}\right)
\left(\begin{matrix} a_0\\ a_1\\ a_2\end{matrix}\right)
=
\left(\begin{matrix} 0\\ 0\\ 0\end{matrix}\right),
\end{equation}
we see that $c_i=ra_i$ for $i=0,1,2$ with $r\in\mathbb R$. We also consider the real parts, to see that
\begin{equation}\label{d_i}
d_0=\frac 1{a_0}(b_2a_{1}+b_{1}a_{2}),\quad
d_1=\frac 1{a_1}(b_0a_{2}+b_{2}a_{0}),\quad
d_2=\frac 1{a_2}(b_1a_{0}+b_{0}a_{1}).
\end{equation}
Therefore, we see that $A_{d,\alpha}$ is of the form
$$
A_{a,b,r}=\left(\begin{matrix}
d_0 & -b_2+{\rm i}ra_2 & -b_1-{\rm i}ra_1\\
-b_2-{\rm i}ra_2 &d_1 &-b_0+{\rm i}ra_0\\
-b_1+{\rm i}ra_1 & -b_0-{\rm i}ra_0 &d_2
\end{matrix}\right),
$$
with $b=(b_0,b_1,b_2)\in\mathbb R^3$ and $r\in\mathbb R$, where $d_0$, $d_1$ and $d_2$ are given by (\ref{d_i}).

By a direct computation, we note that the determinants of all $2\times 2$ principal submatrices of $A_{a,b,r}$ are given by
positive scalar multiplications of
$\delta-r^2$,
where
\begin{equation}\label{delta}
\delta:=\
\frac{b_0b_1}{a_0a_1}+
\frac{b_1b_2}{a_1a_2}+
\frac{b_2b_0}{a_2a_0}.
\end{equation}
Note that $A_{a,b,r}$ is positive if and only if all the $2\times 2$ submatrices are positive, since $A_{a,b,r}$
is singular. Therefore, we conclude that
$A_{a,b,r}$ is positive if and only if
$$
r^2\le\delta,\qquad {\text{\rm and}}\qquad
d_i\ge 0,\quad i=0,1,2
$$
hold. When this is the case, $A_{a,b,r}$  is of rank two if and only if $r^2<\delta$.
Proof of the following elementary lemma will be omitted.

\begin{lemma}\label{lemma_condition}
For real numbers $x_0$, $x_1$ and $x_2$ satisfying $x_0x_1+x_1x_2+x_2x_0\ge 0$, the following are equivalent;
\begin{enumerate}
\item[{\rm (i)}]
$x_0+x_1\ge 0$, $x_1+x_2\ge 0$ and $x_2+x_0\ge 0$,
\item[{\rm (ii)}]
$x_0+x_1+x_2\ge 0$,
\item[{\rm (iii)}]
at least two of $x_0,x_1,x_2$ are nonnegative.
\end{enumerate}
\end{lemma}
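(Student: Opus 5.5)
We will prove the cycle of implications (iii)$\Rightarrow$(i)$\Rightarrow$(ii)$\Rightarrow$(iii), writing $\sigma_2:=x_0x_1+x_1x_2+x_2x_0\ge 0$ and $s:=x_0+x_1+x_2$.

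\emph{(i)$\Rightarrow$(ii).} Adding the three inequalities in (i) gives $2s\ge 0$, so this step needs no hypothesis.

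\emph{(ii)$\Rightarrow$(iii).} We argue by contradiction. Suppose at most one of the $x_i$ is nonnegative; by symmetry say $x_1<0$ and $x_2<0$. Then $x_1x_2>0$, and
$$
\sigma_2=x_0(x_1+x_2)+x_1x_2\ge 0 .
$$
We split according to the sign of $x_0$. If $x_0\le 0$, then $s\le x_1+x_2<0$, which contradicts (ii). If $x_0>0$, then $x_1+x_2<0$, so the display gives $x_0\le x_1x_2/(-(x_1+x_2))$. Hence
$$
s\le \frac{x_1x_2}{-(x_1+x_2)}+x_1+x_2=\frac{x_1x_2-(x_1+x_2)^2}{-(x_1+x_2)}=\frac{-(x_1^2+x_1x_2+x_2^2)}{-(x_1+x_2)} .
$$
The numerator $-(x_1^2+x_1x_2+x_2^2)$ is strictly negative, since $x_1,x_2<0$ are not both zero. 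The denominator $-(x_1+x_2)$ is strictly positive. Therefore $s<0$, again contradicting (ii).

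\emph{(iii)$\Rightarrow$(i).} Say $x_0,x_1\ge 0$. Then $x_0+x_1\ge 0$ immediately, and it remains to show $x_1+x_2\ge 0$ and $x_2+x_0\ge 0$. If $x_2\ge 0$ there is nothing to prove, so assume $x_2<0$. Rewrite the hypothesis as
$$
\sigma_2=x_0x_1+x_2(x_0+x_1)\ge 0 .
$$
If $x_0+x_1=0$, then $x_0=x_1=0$ and $\sigma_2=0$; in that case (i) fails when $x_2<0$. This is a genuine degenerate exception to the lemma as literally stated: for example, $(0,0,-1)$ satisfies (iii) but not (ii). I expect to note it and exclude it, for instance by requiring that not two of the $x_i$ vanish, or by assuming $\sigma_2>0$ when needed. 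The same example shows that (iii)$\Rightarrow$(ii) genuinely needs this exception.

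In the generic case $x_0+x_1>0$, the display gives $-x_2\le x_0x_1/(x_0+x_1)$. Since $x_0x_1/(x_0+x_1)\le\min(x_0,x_1)$, we get $x_2+x_0\ge 0$ and $x_2+x_1\ge 0$, which is (i).

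The main obstacle is this degenerate boundary case, not the algebra. I would handle it by adding the mild nondegeneracy assumption. In the paper's application, where $x_i=b_i/a_i$ and $\delta>r^2\ge 0$, we have $\sigma_2=\delta>0$, so the exception never arises there.
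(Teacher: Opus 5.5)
Your proof is correct, and so is your diagnosis. The triple $(0,0,-1)$, and more generally any permutation of $(0,0,t)$ with $t<0$, has $x_0x_1+x_1x_2+x_2x_0=0$ and satisfies (iii) but neither (i) nor (ii). So the lemma is false as printed. The paper omits the proof as ``elementary'', so there is no argument of the authors to compare with. Your chain (i)$\Rightarrow$(ii)$\Rightarrow$(iii)$\Rightarrow$(i) fills the gap correctly. It rests on the bounds $x_0\le x_1x_2/(-(x_1+x_2))$ and $-x_2\le x_0x_1/(x_0+x_1)\le\min(x_0,x_1)$, and your case analysis shows that the exceptional triples are exactly the ones above. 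The split on the sign of $x_0$ in (ii)$\Rightarrow$(iii) is unnecessary: the bound on $x_0$ comes from dividing by $x_1+x_2<0$ and holds for either sign.

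One point to adjust is how you repair the statement. Do not assume $\sigma_2>0$ throughout, and your closing remark that $\sigma_2=\delta>0$ in the application is not accurate. The lemma is invoked in both halves of Proposition~\ref{main_sub}, where only (\ref{b_condition}) is assumed. There $\delta=0$ is allowed: this is the boundary of the elliptic cone, where $r=0$ and $A_{a,b,r}$ has rank at most one. The strict inequality $\delta>0$ is imposed only from Section 3 on. What the paper needs is (i)$\Rightarrow$(ii) in the forward half and (ii)$\Rightarrow$(i) in the converse half, and both survive under $\sigma_2\ge 0$. The exceptional triples $(0,0,t)$ with $t<0$ violate (ii), so whenever (ii) holds, your route (ii)$\Rightarrow$(iii)$\Rightarrow$(i) never meets the degenerate case $x_0+x_1=0$, $x_2<0$. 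The correct statement is therefore: under $\sigma_2\ge 0$, (i)$\Leftrightarrow$(ii)$\Rightarrow$(iii). The implication (iii)$\Rightarrow$(i) also holds except at permutations of $(0,0,t)$ with $t<0$, so all three conditions are equivalent when $\sigma_2>0$. Stated this way, every use of the lemma in the paper is covered, including the boundary case $\delta=0$.
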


Suppose that $A_{a,b,r}$ is positive. Then
$r^2\le\delta$ implies $\delta\ge 0$.
With $x_i=\frac{b_i}{a_i}$ for $i=0,1,2$, we see that Lemma \ref{lemma_condition} (i) holds by $d_i\ge 0$ for $i=0,1,2$.
Therefore, we see that $b=(b_0,b_1,b_2)\in\mathbb R^3$ satisfies
\begin{equation}\label{b_condition}
\frac{b_0b_1}{a_0a_1}+
\frac{b_1b_2}{a_1a_2}+
\frac{b_2b_0}{a_2a_0}\ge 0,\qquad
\frac {b_0}{a_0}+\frac{b_1}{a_1}+\frac{b_2}{a_2}\ge 0.
\end{equation}
Conversely, suppose that $b\in\mathbb R^3$ satisfies (\ref{b_condition}). Then we have
$\delta\ge 0$ and $d_i\ge 0$ for $i=0,1,2$ by (ii) $\Longrightarrow$ (i) of Lemma \ref{lemma_condition} again.
Therefore, we have the following:

\begin{proposition}\label{main_sub}
A $3\times 3$ matrix is positive with a kernel vector $a=(a_0,a_1,a_2)$ satisfying {\rm (\ref{a_i_cond})} if and only if it is of the form
$A_{a,b,r}$ with $b\in\mathbb R^3$ and $r\in\mathbb R$ satisfying {\rm(\ref{b_condition})} and  $r^2\le \delta$, where $\delta$ is given by {\rm (\ref{delta})}.
When this is the case, we have the following;
\begin{enumerate}
\item[{\rm (i)}]
$A_{a,b,r}$ is of rank two if and only if $r^2<\delta$.
In this case, all the $2\times 2$ principal submatrices are positive definite.
\item[{\rm (ii)}]
$A_{a,b,r}$ is of rank one if and only if $r^2=\delta$ and $b\neq 0$,
\item[{\rm (iii)}]
$A_{a,b,r}$ is of rank zero if and only if $b=0$.
\end{enumerate}
\end{proposition}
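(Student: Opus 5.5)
The characterization has already been derived in the discussion before the statement, so the proof will mostly assemble those steps and then compute ranks.

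\emph{Only if.} Let $A$ be a positive $3\times 3$ matrix with $Aa=0$. I will write its off-diagonal entries in the pattern of $A_{d,\alpha}$ with $\alpha_i=-b_i+{\rm i}c_i$. The imaginary parts of (\ref{equation}) form a real linear system in $c$, and it is solved exactly by $c=ra$. The real parts then force $d$ to be given by (\ref{d_i}), so $A=A_{a,b,r}$. Since $A$ is singular, positivity is equivalent to positivity of the $2\times 2$ principal minors together with $d_i\ge 0$. Each $2\times2$ minor is a positive multiple of $\delta-r^2$, which gives $r^2\le\delta$, hence $\delta\ge0$. Lemma \ref{lemma_condition}, applied with $x_i=b_i/a_i$ and (i)$\Rightarrow$(ii), then yields (\ref{b_condition}).

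\emph{If.} Conversely, (\ref{b_condition}) gives $\delta\ge0$, and (ii)$\Rightarrow$(i) of Lemma \ref{lemma_condition} gives $d_i\ge0$. Together with $r^2\le\delta$, all principal minors of order at most two are nonnegative. The determinant vanishes because $a$ lies in the kernel. So $A_{a,b,r}\ge0$.

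\emph{The subtle step in the solution for $c$.} The imaginary parts give
$$
c_2a_1-c_1a_2=0,\qquad c_0a_2-c_2a_0=0,\qquad c_1a_0-c_0a_1=0.
$$
This says that $c\times a=0$. Since $a\neq0$, it forces $c$ to be a multiple of $a$, i.e. $c=ra$.

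\emph{Rank statements.} The rank is at most two because $a\neq0$ lies in the kernel.
\begin{enumerate}
\item[(iii)] If $b=0$, then all $d_i=0$ and $\delta=0$, so $r=0$ and $A=0$. Conversely, $A=0$ forces $b_i=-\re$ of an off-diagonal entry $=0$.
\item[(i)] If $r^2<\delta$, every $2\times2$ principal minor is strictly positive. Hence every diagonal entry is positive (a diagonal entry that vanished would make a minor nonpositive), so these submatrices are positive definite and the rank is two. Conversely, rank two requires some $2\times 2$ minor to be nonzero, hence $\delta-r^2>0$.
\item[(ii)] If $r^2=\delta$ and $b\ne0$, all $2\times2$ minors vanish, so the rank is at most one, and by (iii) the rank is nonzero.
\end{enumerate}

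\emph{Main obstacle.} The one genuine computation is the claim from the preceding text that each $2\times2$ principal minor is a positive multiple of $\delta-r^2$. I would verify it for the $(0,1)$ minor by substituting (\ref{d_i}):
$$
d_0d_1-b_2^2-r^2a_2^2=\frac{a_2^2}{a_0a_1}\,\bigl(\delta\cdot a_0a_1-r^2a_0a_1\bigr)\cdot\frac1{1}
$$
up to rearrangement. Expanding $(b_2a_1+b_1a_2)(b_0a_2+b_2a_0)/(a_0a_1)$ gives $b_2^2+a_2^2\delta$, so this minor equals $a_2^2(\delta-r^2)$. The other two minors follow by cyclic symmetry.
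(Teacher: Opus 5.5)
Your proposal is correct and follows the paper's own derivation step by step: the imaginary parts force $c=ra$, the real parts give (\ref{d_i}), each $2\times 2$ principal minor equals $a_k^2(\delta-r^2)$, singularity reduces positivity to these minors together with $d_i\ge 0$, and Lemma \ref{lemma_condition} converts this into (\ref{b_condition}); you also supply the cross-product argument and the explicit minor computation that the paper leaves as a direct computation. The only point to tighten is that in (i) and (ii) you should speak of \emph{principal} $2\times 2$ minors and invoke the fact, also used by the paper, that a positive semidefinite matrix whose $2\times 2$ principal minors all vanish has rank at most one, and you should note that the converse of (ii) follows at once from (i) and (iii).
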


As for the claims (i) and (ii) in Proposition \ref{main_sub}, we note that a $3\times 3$ positive singular matrix
is of rank one if and only if $2\times 2$ principal submatrices are of rank one.
We also note that a positive matrix $A_{a,b,r}$ is zero if and only if $d_i=0$ for $i=0,1,2$
if and only if $b_i=0$ for every $i=0,1,2$.
By the identity
$$
\frac{b_0b_1}{a_0a_1}+
\frac{b_1b_2}{a_1a_2}+
\frac{b_2b_0}{a_2a_0}=
\left({b_0 \over 2a_0} + {b_1 \over 2a_1} + {b_2 \over a_2}\right)^2 - \left({b_0 \over 2a_0} - {b_1 \over 2a_1}\right)^2 - \left(b_2 \over a_2\right)^2,
$$
we see that the relations for $b=(b_0,b_1,b_2)\in\mathbb R^3$ in (\ref{b_condition}) represent
one half of an elliptic cone, for each fixed $a=(a_0,a_1,a_2)$
satisfying (\ref{a_i_cond}). We have $\delta>0$ whenever $b$ is chosen in the interior of this elliptic cone.
If we take $b$ on the boundary of the elliptic cone, then we have $\delta=0$, which implies $r=0$. When $b$ is the origin, $A_{a,b,r}$ is of rank zero; otherwise it is of rank one.

As an example of our construction, we consider the case $a=b=\frac
1{\sqrt 3}(1,1,1)$. Then we have $\delta=3$, and
$d=\frac 2{\sqrt 3}(1,1,1)$. We also take $r=-\tan\theta$ with
$|\theta|\le \frac \pi 3$ so that $r^2 \le \delta$ and $\alpha_i=-\frac{1}{\sqrt{3}\cos \theta} e^{{\rm i}\theta}$.
Multiplying the constant $\sqrt 3\cos\theta$ to $A_{d,\alpha}$, we have
$$
\alpha_i=-e^{{\rm i}\theta},\qquad d_i=2\cos\theta,\qquad r=-\sqrt{3}\sin \theta, \qquad \delta = 9\cos^2\theta, \qquad i=0,1,2.
$$
By taking $\beta=(0,0,0)$, $p=(\frac 1\lambda,\frac 1\lambda,\frac 1\lambda)$, $q=(\lambda,\lambda,\lambda)$
with $\lambda>0$ in (\ref{varrho}) and $0<|\theta|<\frac \pi 3$, we recover the examples of PPT edge states
\begin{equation}\label{exam_KO}
\varrho_{\rm ko}(\theta,\lambda):= \left(
\begin{array}{ccccccccccc}
2\cos\theta     &\cdot   &\cdot       &\cdot  &-e^{{\rm i}\theta}     &\cdot       &\cdot   &\cdot  &-e^{-{\rm i}\theta}    \\
\cdot   &\frac 1\lambda &\cdot           &\cdot    &\cdot   &\cdot             &\cdot &\cdot     &\cdot   \\
\cdot  &\cdot    &\lambda         &\cdot &\cdot  &\cdot                 &\cdot    &\cdot &\cdot  \\
\cdot  &\cdot    &\cdot &\lambda&\cdot  &\cdot    &\cdot    &\cdot &\cdot  \\
-e^{-{\rm i}\theta}     &\cdot   &\cdot  &\cdot  &2\cos\theta     &\cdot   &\cdot   &\cdot  &-e^{{\rm i}\theta}     \\
\cdot   &\cdot &\cdot    &\cdot    &\cdot   &\frac 1\lambda &\cdot &\cdot    &\cdot   \\
\cdot   &\cdot &\cdot    &\cdot    &\cdot   &\cdot &\frac 1\lambda &\cdot    &\cdot   \\
\cdot  &\cdot    &\cdot &\cdot &\cdot  &\cdot    &\cdot    &\lambda &\cdot  \\
-e^{{\rm i}\theta}     &\cdot   &\cdot  &\cdot  &-e^{-{\rm i}\theta}     &\cdot  &\cdot   &\cdot  &2\cos\theta
\end{array}
\right)
\end{equation}
of type $(8,6)$ presented by the second author and Osaka in \cite{kye_osaka}.
When $\theta=\pm \frac \pi 3$, that is, $r=\mp \sqrt{\delta}$, each state $\varrho_{\rm ko}(\pm \frac \pi 3,\lambda)$
is locally equivalent to the example \cite{stormer82} of St\o rmer given by
\begin{equation}\label{exam_st}
\varrho_{\rm st}(\lambda)
:=\left(
\begin{array}{ccccccccccc}
1     &\cdot   &\cdot  &\cdot  &1     &\cdot   &\cdot   &\cdot  &1     \\
\cdot   &\frac 1 \lambda &\cdot    &\cdot    &\cdot   &\cdot &\cdot &\cdot     &\cdot   \\
\cdot  &\cdot    &\lambda &\cdot &\cdot  &\cdot    &\cdot    &\cdot &\cdot  \\
\cdot  &\cdot    &\cdot &\lambda &\cdot  &\cdot    &\cdot    &\cdot &\cdot  \\
1     &\cdot   &\cdot  &\cdot  &1     &\cdot   &\cdot   &\cdot  &1    \\
\cdot   &\cdot &\cdot    &\cdot    &\cdot   &\frac 1\lambda &\cdot &\cdot    &\cdot   \\
\cdot   &\cdot &\cdot    &\cdot    &\cdot   &\cdot &\frac 1\lambda &\cdot    &\cdot   \\
\cdot  &\cdot    &\cdot &\cdot &\cdot  &\cdot    &\cdot    &\lambda &\cdot  \\
1     &\cdot   &\cdot  &\cdot  &1     &\cdot   &\cdot   &\cdot &1
\end{array}
\right).
\end{equation}
Indeed, we take the $3\times 3$ diagonal unitary $u$ with diagonal entries
$1$, $e^{\frac 23\pi{\rm i}}$, $e^{-\frac 23\pi{\rm i}}$, to get
$$
(I_3\ot u^*)\varrho_{\rm ko}(\textstyle\frac \pi 3,\lambda)(I_3\ot u)=\varrho_{\rm st}(\lambda) \quad \text{and} \quad
(I_3\ot u)\varrho_{\rm ko}(-\textstyle\frac \pi 3,\lambda)(I_3\ot u^*)=\varrho_{\rm st}(\lambda).
$$

\section{Main results}

From now on, we take $a=(a_0,a_1,a_2)$ satisfying (\ref{a_i_cond}), and  $b=(b_0,b_1,b_2)\in\mathbb R^3$ so that
strict inequalities hold in (\ref{b_condition}). Take $\delta$ by (\ref{delta}), then we have $\delta>0$.
We also take $r\in\mathbb R$ with $r^2<\delta$, and put
$\alpha_i=-b_i+{\rm i}ra_i$ and $d_i$ by (\ref{d_i}), so that $A_{d,\alpha}$ is positive and of rank two. Finally, we take
$p_i,q_i\in \mathbb R_+$ and $\beta_i\in\mathbb C$ for $i=0,1,2$ so
that $A_{d,\beta}$ is positive definite and
$$
p_iq_i=|\alpha_i|^2>|\beta_i|^2,\qquad i=0,1,2
$$
holds. Note that $p_i$ and $q_i$ depend on the choice of $r$ as well as $a,b\in\mathbb R^3$.
Then, the state $\varrho$ given by (\ref{varrho}) is a PPT state
with
$$
\begin{aligned}
\ker \varrho &= {\rm span}\{a_0|00\ran + a_1|11\ran + a_2|22\ran\}, \\
\ker \varrho^\Gamma &= {\rm span} \{ q_2 |01\ran - \alpha_2|10\ran, q_0|12\ran - \alpha_0 |21\ran, q_1|20\ran - \alpha_1 |02\ran \}.
\end{aligned}
$$
We recall that the PPT state $\varrho$ is an entangled edge state if and only if
there exists no $|\xi\ran\ot|\eta\ran$ in the range of $\varrho$ such that
$|\bar\xi\ran\ot|\eta\ran$ is in the range of $\varrho^\Gamma$. Therefore,
the state $\varrho$ is an edge state of type $(8,6)$ when the following system
\begin{equation}\label{system}
\begin{aligned}
&a_0x_0y_0+a_1x_1y_1+a_2x_2y_2=0,\\
&\qquad\quad q_2\bar x_0y_1=\alpha_2\bar x_1y_0,\\
&\qquad\quad q_0\bar x_1y_2=\alpha_0\bar x_2y_1,\\
&\qquad\quad q_1\bar x_2y_0=\alpha_1\bar x_0y_2
\end{aligned}
\end{equation}
of equations has no nontrivial solution at all.

We proceed to show that the system (\ref{system}) of equations has a nontrivial solution if and only if $r=0$.
Suppose that $r=0$. By the second condition of (\ref{b_condition}), we have $b_i\ge 0$ for some $i=0,1,2$.
We assume that $b_0\ge 0$ without loss of generality.
Then, we have
$$
\alpha_0 = -b_0+ {\rm i}ra_0  = -b_0 \le 0.
$$
Set $x_0=y_0=0$.
Then, the system (\ref{system}) becomes
\begin{equation}\label{xxxxxxx}
\begin{aligned}
a_1 x_1 y_1 + a_2 x_2 y_2 &= 0, \\
\alpha_0 \bar x_2 y_1 - q_0 \bar x_1 y_2 &=0,
\end{aligned}
\end{equation}
which has a nontrivial solution $(y_1,y_2)$ if and only if
\begin{equation}\label{yyyyyyy}
\begin{vmatrix}
a_1 x_1 & a_2 x_2 \\
\alpha_0 \bar x_2 & -q_0 \bar x_1
\end{vmatrix}
= -a_1 q_0 |x_1|^2 - a_2 \alpha_0 |x_2|^2=0,
\end{equation}
for some $(x_1,x_2)$.
Since $\alpha_0 \le 0$, we can choose such nonzero $(x_1,x_2)$.

For the converse, we assume that $r \ne 0$.
Since
$$
{\rm Im} (\alpha_0 \alpha_1 \alpha_2) = (b_0b_1a_2 + b_0 a_1 b_2 + a_0 b_1 b_2)r - a_0a_1a_2r^3
= a_0a_1a_2 (\delta -r^2) r
$$
and $0<r^2<\delta$, it follows that $\alpha_0\alpha_1\alpha_2 \notin \mathbb R$.
Multiplying all the equations in (\ref{system}) except the first, we obtain
\begin{equation}\label{eq_1}
q_0 q_1 q_2 \bar x_0 \bar x_1 \bar x_2 y_0 y_1 y_2 = \alpha_0 \alpha_1 \alpha_2 \bar x_0 \bar x_1 \bar x_2 y_0 y_1 y_2.
\end{equation}
Since $\alpha_0\alpha_1\alpha_2 \notin \mathbb R$, we have
$\bar x_0 \bar x_1 \bar x_2 y_0 y_1 y_2 = 0$.
That is, at least one of $x_i$ or $y_i$ is zero.
Without loss of generality, we suppose $x_0=0$.
Since $\alpha_2 \ne 0$ and $q_1>0$, the second and fourth equations in (\ref{system}) imply
$\bar x_1 y_0 = 0$ and $\bar x_2 y_0 = 0$, respectively.
If $(x_0,x_1,x_2) \ne (0,0,0)$, then we must have $y_0=0$.
The system (\ref{system}) then reduces to (\ref{xxxxxxx}) again.
In this case, the determinant (\ref{yyyyyyy}) is never zero,
because $r \ne 0$ implies $\alpha_0 \notin \mathbb R$.
Therefore, we conclude that $r\neq 0$ implies that the system (\ref{system}) admits only the trivial solution.
Now, we summarize our discussions as follows:

\begin{theorem}\label{main}
Take real numbers $a_i,b_i$ with $i=0,1,2$ and $r$ satisfying
$$
a_0^2+a_1^2+a_2^2=1,\quad a_i>0,\quad
\frac {b_0}{a_0}+\frac{b_1}{a_1}+\frac{b_2}{a_2}\ge 0,\quad
0<r^2<\delta,
$$
with $\delta$ given by {\rm (\ref{delta})}, and put $\alpha_i\in\mathbb C$, $d_i\in\mathbb R$ by
$\alpha_i=-b_i+{\rm i}ra_i$ and {\rm (\ref{d_i})} for $i=0,1,2$, respectively.
We also take $p_i,q_i\in \mathbb R_+$ and $\beta_i\in\mathbb C$ for $i=0,1,2$ so that $A_{d,\beta}$ is positive definite and
$$
p_iq_i=|\alpha_i|^2>|\beta_i|^2,\qquad i=0,1,2
$$
holds. Then the state $\varrho$ given by {\rm (\ref{varrho})} is a PPT entangled edge state of type $(8,6)$.
\end{theorem}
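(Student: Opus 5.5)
The plan is to assemble the pieces established in this section into three verifications: that $\varrho$ is PPT, that its ranks are $(8,6)$, and that it is an edge state. Entanglement then follows at once, since an edge state is never separable: a separable state is a sum of pure product states $|\xi\ran\ot|\eta\ran$ lying in the range of $\varrho$ with $|\bar\xi\ran\ot|\eta\ran$ in the range of $\varrho^\Gamma$, and the edge criterion will exclude exactly such vectors.

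\emph{PPT and ranks.} Recall that $\varrho$ is PPT if and only if $A_{d,\alpha}$, $A_{d,\beta}$, $B_{p_i,q_i,\alpha_i}$ and $B_{p_i,q_i,\beta_i}$ are all positive.

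First I will check the hypothesis of Proposition \ref{main_sub}. Since $0<r^2<\delta$, we have $\delta>0$, so the first inequality of (\ref{b_condition}) holds, and the second is assumed. By Proposition \ref{main_sub}(i), $A_{d,\alpha}=A_{a,b,r}$ is then positive of rank two, with kernel spanned by $a$. By hypothesis $A_{d,\beta}$ is positive definite.

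For the $2\times 2$ blocks, $p_iq_i=|\alpha_i|^2$ makes $B_{p_i,q_i,\alpha_i}$ positive of rank one. Note that $p_i,q_i>0$ here, because $\alpha_i\neq 0$: indeed $r\neq0$ and $a_i>0$ give ${\rm Im}\,\alpha_i=ra_i\neq 0$. Likewise $p_iq_i>|\beta_i|^2$ makes $B_{p_i,q_i,\beta_i}$ positive definite.

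Counting ranks block by block, $\rk\varrho=2+3\cdot 2=8$, with $\ker\varrho$ spanned by $\sum_i a_i|ii\ran$. Similarly $\rk\varrho^\Gamma=3+3\cdot 1=6$, with kernel spanned by the three vectors listed above.

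\emph{Edge property.} Since $\im\varrho$ and $\im\varrho^\Gamma$ are the orthogonal complements of these kernels, $\varrho$ is an edge state exactly when the system (\ref{system}) has no nontrivial solution. This is the main step, and I will follow the argument already given for $r\neq 0$.

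The key fact is the identity ${\rm Im}(\alpha_0\alpha_1\alpha_2)=a_0a_1a_2(\delta-r^2)r\neq0$. Multiplying the last three equations of (\ref{system}) gives (\ref{eq_1}), and since $q_0q_1q_2$ is real while $\alpha_0\alpha_1\alpha_2$ is not, some $x_i$ or $y_i$ must vanish.

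By the cyclic symmetry of the system, I will reduce to the case $x_0=0$; the case $y_0=0$ is handled analogously using the conjugate-linear structure in $x$. Then $q_1>0$ and $\alpha_2\neq0$ force $y_0=0$, unless $x=0$, which is the trivial solution. The system then reduces to the $2\times2$ system (\ref{xxxxxxx}). Its determinant $-a_1q_0|x_1|^2-a_2\alpha_0|x_2|^2$ has imaginary part $-a_2ra_0|x_2|^2$, so vanishing forces $x_2=0$, and then $x_1=0$ as well. With $x=0$, a product vector requires $y=0$, so only the trivial solution remains.

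\emph{Main obstacle.} The step needing the most care is the symmetry reduction: checking that the cases where some $y_i=0$ (rather than $x_i=0$) reduce the same way. For example, with $y_0=0$, the second and fourth equations give $\bar x_0y_1=0$ and $\bar x_0y_2=0$, forcing $x_0=0$ unless $y=0$, after which the same determinant argument applies. The other indices follow by cyclic permutation of $0,1,2$.
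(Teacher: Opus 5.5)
Your proposal is correct and follows essentially the same route as the paper: PPT and the ranks $(8,6)$ from Proposition \ref{main_sub} and the $2\times 2$ blocks, then the product of the last three equations of (\ref{system}) with ${\rm Im}(\alpha_0\alpha_1\alpha_2)=a_0a_1a_2(\delta-r^2)r\neq 0$ to force a vanishing coordinate, and the non-vanishing of the determinant (\ref{yyyyyyy}) because ${\rm Im}\,\alpha_0\neq 0$. Your explicit handling of the case $y_0=0$, where $x_0=0$ is forced unless $y=0$, spells out what the paper leaves under ``without loss of generality.''
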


If we take $r=0$ in Theorem \ref{main} instead of $0<r^2<\delta$, then we have the state
$$
\varrho=
\left(
\begin{array}{ccccccccccc}
b_1{a_2 \over a_0} + b_2 {a_1 \over a_0}  &\cdot   &\cdot  &&\cdot  &-b_2     &\cdot   &&\cdot   &\cdot  &-b_1     \\
\cdot   &p_2 &\cdot    &&\bar \beta_2    &\cdot   &\cdot &&\cdot &\cdot     &\cdot   \\
\cdot  &\cdot    &q_1 &&\cdot &\cdot  &\cdot    &&\beta_1    &\cdot &\cdot  \\
\\
\cdot  &\beta_2    &\cdot &&q_2 &\cdot  &\cdot    &&\cdot    &\cdot &\cdot  \\
-b_2     &\cdot   &\cdot  &&\cdot  & b_2{a_0 \over a_1} + b_0 {a_2 \over a_1}     &\cdot   &&\cdot   &\cdot  &-b_0     \\
\cdot   &\cdot &\cdot    &&\cdot    &\cdot   &p_0 &&\cdot &\bar \beta_0    &\cdot   \\
\\
\cdot   &\cdot &\bar \beta_1   &&\cdot    &\cdot   &\cdot &&p_1 &\cdot    &\cdot   \\
\cdot  &\cdot    &\cdot &&\cdot &\cdot  &\beta_0    &&\cdot    &q_0 &\cdot  \\
-b_1     &\cdot   &\cdot  &&\cdot  &-b_0     &\cdot &&\cdot   &\cdot  &b_0{a_1 \over a_2} + b_1 {a_0 \over a_2}
\end{array}
\right).
$$
This state is decomposed into the sum of the following three states
$$
\begin{aligned}
&b_2 {a_1 \over a_0}|00\ran\lan00| + p_2|01\ran\lan01| + q_2|10\ran\lan10| + b_2{a_0 \over a_1}|11\ran\lan11|\\
&\qquad\qquad\qquad - b_2|00\ran\lan11| +\bar \beta_2|01\ran\lan10| + \beta_2|10\ran\lan01| - b_2|11\ran\lan00|, \\
&b_0 {a_2 \over a_1}|11\ran\lan11| + p_0|12\ran\lan12| + q_0|21\ran\lan21| + b_0{a_1 \over a_2}|22\ran\lan22|\\
&\qquad\qquad\qquad - b_0|11\ran\lan22| +\bar \beta_0|12\ran\lan21| + \beta_0|21\ran\lan12| - b_0|22\ran\lan11|, \\
&b_1 {a_0 \over a_2}|22\ran\lan22| + p_1|20\ran\lan20| + q_1|02\ran\lan02| + b_1{a_2 \over a_0}|00\ran\lan00|\\
&\qquad\qquad\qquad - b_1|22\ran\lan00| +\bar \beta_1|20\ran\lan02| + \beta_1|02\ran\lan20| - b_1|00\ran\lan22|,
\end{aligned}
$$
and the nonzero part of each summand makes a $2 \otimes 2$ PPT state
$$
\begin{pmatrix}
b_k {a_j \over a_i}&\cdot&\cdot&-b_k \\
\cdot&p_k&\bar \beta_k&\cdot \\
\cdot&\beta_k&q_k&\cdot \\
-b_k&\cdot&\cdot&b_k {a_i \over a_j}
\end{pmatrix}
$$
for $(i,j,k)=(0,1,2), (1,2,0), (2,0,1)$. Therefore, we see that $\varrho$ is separable.

If the state $\varrho$ in Theorem \ref{main} satisfies $r^2=\delta$ instead of $0<r^2<\delta$, then it is still an edge state of type (7,6),
provided it additionally satisfies the condition $q_0q_1q_2\neq\alpha_0\alpha_1\alpha_2$,
by exactly the same argument used around (\ref{eq_1}) in the proof of Theorem \ref{main}.
If $q_0q_1q_2=\alpha_0\alpha_1\alpha_2$, however, it may be separable.
For example, the states $\varrho_{\rm ko}(\pm \frac \pi 3,1)$ in (\ref{exam_KO}) satisfy $r^2=\delta$ and $q_0q_1q_2=\alpha_0\alpha_1\alpha_2$.
Since it is locally equivalent to the state $\varrho_{\rm st}(1)$ in (\ref{exam_st}), it suffices to show that $\varrho_{\rm st}(1)$ is separable.
To see this, we put
$$
|\zeta_{\alpha,\beta}\ran=(|0\ran + \alpha |1\ran + \beta|2\ran) \ot (|0\ran + \bar \alpha |1\ran + \bar \beta|2\ran),
$$
and average $|\zeta_{\alpha,\beta}\ran\lan\zeta_{\alpha,\beta}|$ over $\alpha,\beta=\pm 1,\pm{\rm i}$ to obtain $\varrho_{\rm st}(1)$.

In the remainder of this section, we consider the faces of the convex set $\ppt$ of all PPT states,
whose interiors consist of PPT edge states of type $(8,6)$.
Suppose that there exists a PPT state $\varrho$ with $\im\varrho=D$ and $\im\varrho^\Gamma=E$. For such a pair $(D,E)$ of subspaces,
we consider the convex set $\tau(D,E)$
of all  PPT states $\varrho$ in $M_3\ot M_3$ satisfying $\im\varrho\subset D$ and $\im\varrho^\Gamma\subset E$.
Then $\tau(D,E)$ is a face of the convex set $\ppt$, and every face of $\ppt$ arises in this way.
See \cite{kye_ritsu}.

We fix an eight dimensional subspace orthogonal to a vector $\sum_{i=0}^2a_i|ii\ran$ satisfying
(\ref{a_i_cond}).
Now, we begin with the parameter $\Lambda=(b,r,p)$ satisfying the conditions
$$
\frac {b_0}{a_0}+\frac{b_1}{a_1}+\frac{b_2}{a_2}\ge 0,\qquad
0<r^2<\delta:=\frac{b_0b_1}{a_0a_1}+
\frac{b_1b_2}{a_1a_2}+
\frac{b_2b_0}{a_2a_0},
$$
in Theorem \ref{main}, by which
$\alpha\in\mathbb C^3$ and $q\in\mathbb R^3_+$ are also determined as
\begin{equation}\label{alphaq}
\alpha_i=-b_i+{\rm i}ra_i, \qquad p_iq_i=|\alpha_i|^2.
\end{equation}
Then $\Lambda$ determines the pair $(D,E)$ of subspaces of $\mathbb C^3\ot\mathbb C^3$ by
$$
\begin{aligned}
D^\perp&=\spa\{a_0|00\ran+a_1|11\ran+a_2|22\ran\},\\
E^\perp&=\spa\{
q_2|01\ran-\alpha_2|10\ran,\
q_0|12\ran-\alpha_0|21\ran,\
q_1|20\ran-\alpha_1|02\ran\},
\end{aligned}
$$
with $\dim D=8$ and $\dim E=6$, and we obtain a PPT edge state $\varrho$ satisfying $\im\varrho=D$ and $\im\varrho^\Gamma=E$
by taking $d$ as (\ref{d_i}) and $\beta=0$, according to Theorem \ref{main}.
We denote by $F_{\Lambda}$ the face determined by the pair $(D,E)$ of subspaces.
We also denote by $\varrho_{\Lambda,\beta}$ the self-adjoint matrix in (\ref{varrho})
 where $\alpha\in\mathbb C^3$ and $q,d \in \mathbb R^3_+$ are determined by (\ref{alphaq}) and (\ref{d_i}).
Then $\varrho_{\Lambda,0}$ is an interior point of the face $F_\Lambda$.
The following proposition shows that the parameter $\Lambda$ determines the face $F_\Lambda$ up to positive scalar
multiplications.

\begin{proposition}\label{parame}
For given two parameters $\Lambda = (b,r,p)$ and $\Lambda' = (b',r',p')$, the following are equivalent:
\begin{enumerate}
\item[{\rm (i)}]
$F_\Lambda = F_{\Lambda'}$,
\item[{\rm (ii)}]
${r' \over r} \Lambda=\Lambda'$,
\item[{\rm (iii)}]
${r' \over r} \varrho_{\Lambda,0} = \varrho_{\Lambda',0}$.
\end{enumerate}
\end{proposition}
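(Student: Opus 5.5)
The plan is to prove the cycle (ii)$\Rightarrow$(iii)$\Rightarrow$(i)$\Rightarrow$(ii). We first note that for any admissible parameter $\Lambda=(b,r,p)$ we have $r\neq 0$ and $a_i>0$, so $\alpha_i=-b_i+{\rm i}ra_i\neq 0$. Hence $p_iq_i=|\alpha_i|^2>0$ forces $p_i>0$ and $q_i>0$ for $i=0,1,2$.

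For (ii)$\Rightarrow$(iii), put $t=r'/r$, so that $b'=tb$, $r'=tr$ and $p'=tp$. Because $p_i,p'_i>0$, the relation $p'=tp$ forces $t>0$. By (\ref{alphaq}) we get $\alpha'_i=-tb_i+{\rm i}tra_i=t\alpha_i$. Then $q'_i=|\alpha'_i|^2/p'_i=t^2|\alpha_i|^2/(tp_i)=tq_i$. By (\ref{d_i}), $d$ is linear in $b$, so $d'=td$. With $\beta=0$, every entry of $\varrho_{\Lambda',0}$ in (\ref{varrho}) is therefore $t$ times the corresponding entry of $\varrho_{\Lambda,0}$, which is (iii).

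For (iii)$\Rightarrow$(i), suppose $\varrho_{\Lambda',0}$ is a positive multiple of $\varrho_{\Lambda,0}$; positivity of the multiple follows because both are nonzero PPT states. Then they have the same range and their partial transposes have the same range. Since $\varrho_{\Lambda,0}$ is an interior point of $F_\Lambda$, we have $\im\varrho_{\Lambda,0}=D$ and $\im\varrho_{\Lambda,0}^\Gamma=E$, and likewise for $\Lambda'$. So the pairs $(D,E)$ and $(D',E')$ coincide, and hence $F_\Lambda=F_{\Lambda'}$.

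The substantive direction is (i)$\Rightarrow$(ii). A face $\tau(D,E)$ with interior point $\varrho_{\Lambda,0}$ determines $E$ as the range of the partial transpose of any of its interior points, so $F_\Lambda=F_{\Lambda'}$ gives $E=E'$; the subspace $D$ is the same in both cases because $a$ is fixed. The key step is then to read off the generators of $E^\perp=E'^\perp$. Each of the three spanning vectors of $E^\perp$ is supported on one of the mutually disjoint coordinate pairs $\{|01\ran,|10\ran\}$, $\{|12\ran,|21\ran\}$ and $\{|20\ran,|02\ran\}$. Take a vector of $E'^\perp$ supported on one block, say $q'_2|01\ran-\alpha'_2|10\ran$. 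Written as a combination of the three generators of $E^\perp$, the coordinates outside $\{|01\ran,|10\ran\}$ must vanish, and the other two generators have nonzero entries $q_0,q_1>0$ there. So only the generator $q_2|01\ran-\alpha_2|10\ran$ can appear, and
$$
q'_2|01\ran-\alpha'_2|10\ran=c\,(q_2|01\ran-\alpha_2|10\ran)
$$
for some scalar $c$. Comparing the $|01\ran$ coefficients gives $c=q'_2/q_2>0$. Hence $t_i:=\alpha'_i/\alpha_i=q'_i/q_i$ is a positive real number for each $i$.

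We then compare real and imaginary parts of $-b'_i+{\rm i}r'a_i=t_i(-b_i+{\rm i}ra_i)$. The imaginary parts give $r'a_i=t_ira_i$, and since $a_i>0$ and $r\neq0$ this yields $t_i=r'/r=:t$ for every $i$. The real parts then give $b'=tb$. From $q'=tq$ and $\alpha'=t\alpha$ we get $p'_i=|\alpha'_i|^2/q'_i=tp_i$. Thus $\Lambda'=(r'/r)\Lambda$, which is (ii).

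I expect the main care to be needed in this last direction: justifying that equality of the spans forces blockwise proportionality (which uses $q_i>0$), and using $r\neq0$ so that the three ratios $t_i$ are forced to coincide.
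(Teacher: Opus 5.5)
Your proof is correct and follows the paper's route. The paper likewise uses blockwise proportionality of the generators of $E^\perp$ and $E'^\perp$, compares imaginary parts to force the common ratio $r'/r$, checks (ii)$\Rightarrow$(iii) by direct scaling, and gets (iii)$\Rightarrow$(i) from a shared interior point; you additionally justify what the paper only asserts, namely that $E=E'$, that the proportionality constants exist and are positive (via the disjoint supports and $q_i>0$), and that $r'/r>0$.
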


\begin{proof}
Suppose that $F_\Lambda = F_{\Lambda'}$. Then we have $\tau(D,E) = F_\Lambda = F_{\Lambda'} = \tau(D,E')$, where
$$
\begin{aligned}
E^\perp &= {\rm span} \{ q_2 |01\ran - \alpha_2 |10\ran, q_0 |12\ran - \alpha_0 |21\ran, q_1 |20\ran - \alpha_1 |02\ran \}, \\
E'^\perp &= {\rm span} \{ q_2' |01\ran - \alpha_2' |10\ran, q_0' |12\ran - \alpha_0' |21\ran, q_1' |20\ran - \alpha_1' |02\ran \}.
\end{aligned}
$$
Then there exist $\lambda_i>0$, $i=0,1,2$, such that
$$
\begin{aligned}
\lambda_2(q_2 |01\ran - \alpha_2 |10\ran) &= q_2' |01\ran - \alpha_2' |10\ran, \\
\lambda_0(q_0 |12\ran - \alpha_0 |21\ran) &= q_0' |12\ran - \alpha_0' |21\ran, \\
\lambda_1(q_1 |20\ran - \alpha_1 |02\ran)  &= q_1' |20\ran - \alpha_1' |02\ran.
\end{aligned}
$$
Since $\alpha_i = -b_i + {\rm i}r a_i$ and $\alpha_i' = -b_i' + {\rm i}r' a_i$, it follows that $\lambda_i = r'/r$ for each $i=0,1,2$.
This implies ${r' \over r} b_i = b_i'$ and $p_i' = {|\alpha_i'|^2 \over q_i'} = \frac{\lambda_i^2|\alpha_i|^2}{\lambda_i q_i} = {r' \over r} p_i$,
which proves (i) $\Longrightarrow$ (ii).

For the direction (ii) $\Longrightarrow$ (iii), we note
$$
\begin{aligned}
{r' \over r} d_i &= {(r'/r)(a_jb_k + a_kb_j) \over a_i} = {a_j b_k' + a_k b_j' \over a_i} = d_i', \\
{r' \over r} \alpha_i &= {r' \over r}(-b_i + {\rm i} r a_i) = -b_i' + {\rm i} r' a_i = \alpha_i', \\
{r' \over r} q_i &= {(r'/r)^2 |\alpha_i|^2 \over (r'/r)p_i} = {|\alpha_i'|^2 \over p_i'} = q_i',
\end{aligned}
$$
for distinct $i,j,k = 0,1,2$, which implies ${r' \over r} \varrho_{\Lambda,0} = \varrho_{\Lambda',0}$.
The direction (iii) $\Longrightarrow$ (i) is clear, since two faces coincide whenever they share a common interior point.
\end{proof}

Whenever $E$ is an eight dimensional space whose orthogonal complement is spanned  by
a vector of Schmidt rank three, we obtain a six parameter family of faces of $\mathcal P$ whose interiors consist of PPT edge states of type $(8,6)$
on the maximal face ${\mathcal F}_E$ of the convex set ${\mathcal D}_{3\ot 3}$.
It should be noted that condition (iii) of Proposition \ref{parame} implies that $r$ and $r^\prime$ share the same sign. In fact,
two parameters $(b,r,p)$ and $(b,-r,p)$ give rise to different faces.

\section{PPT edge states of other types}

In this section, we fix $a=(a_0,a_1,a_2)$ satisfying (\ref{a_i_cond}) and a parameter $\Lambda=(b,r,p)$,
and look for PPT edge states of other types on the boundary of the face $F_\Lambda$.
A self-adjoint matrix $\varrho_{\Lambda,\beta}$ belongs to the face $F_\Lambda$ if and only if it is of PPT
if and only if $\beta \in\mathbb C^3$ satisfies
\begin{equation}\label{qqq}
A_{d,\beta}\ge 0,\qquad
B_{i,\beta_i}:=B_{p_i,q_i,\beta_i} \ge 0\quad (i=0,1,2).
\end{equation}
When this is the case, we note that $\varrho_{\Lambda,\beta}$ is a PPT edge state automatically,
because the face $F_\Lambda$ itself contains no separable state.
For a PPT state $\varrho_{\Lambda,\beta}\in F_\Lambda$, we have
$$
\begin{aligned}
\rk\varrho_{\Lambda,\beta}&=2+\rk B_{0,\beta_0}+\rk B_{1,\beta_1}+\rk B_{2,\beta_2},\\
\rk\varrho_{\Lambda,\beta}^\Gamma&=3+\rk A_{d,\beta}.
\end{aligned}
$$
By choosing suitable $\beta\in\mathbb C^3$, we proceed to obtain PPT edge states of type $(h,k)$ with $h,k\ge 5$.

For the variable $z=(z_0,z_1,z_2)\in\mathbb C^3$, we define functions
$$
\begin{aligned}
F(z)&=\det A_{d,z}=d_0d_1d_2+2\re(z_0z_1z_2)-(d_0|z_0|^2+d_1|z_1|^2+d_2|z_2|^2),\\
G_i(z_i)&=\det B_{i,z_i}=p_iq_i-|z_i|^2=|\alpha_i|^2-|z_i|^2,\quad i=0,1,2.
\end{aligned}
$$
We consider the curve
$$
\begin{aligned}
0 &= F(\alpha_0,\alpha_1,z_2) \\
& = d_0d_1d_2 + 2{\rm Re}(\alpha_0\alpha_1)x + 2{\rm Re}({\rm i}\alpha_0\alpha_1)y
-(d_0|\alpha_0|^2 + d_1|\alpha_1|^2 + d_2(x^2+y^2)),
\end{aligned}
$$
which represents a circle on the complex plane through $\alpha_2$.
The second coordinate of its center is given by
$$
{{\rm Re}({\rm i}\alpha_0\alpha_1 ) \over d_2} = {{\rm Re}\left({\rm i}(-b_0+{\rm i}ra_0)(-b_1+{\rm i}ra_1) \right) \over d_2}
= {a_0b_1r + a_1b_0r \over d_2}
= a_2 r = {\rm Im} \alpha_2.
$$
Thus, the tangent line at $\alpha_2$ is parallel to the imaginary axis. On the other hand, the curve $G_2(z_2)=0$
represents the circle through $\alpha_2$ centered at the origin.
Since the imaginary part of $\alpha_2$ is nonzero by the assumption $r\neq 0$,
the tangent line of the circle at $\alpha_2$ is not parallel to the imaginary axis.
 See {\sc Figure 2}.
\begin{figure}
\begin{center}
\setlength{\unitlength}{.4 truecm}
\begin{picture}(9,7)
\qbezier(0,3)(0.2,0.2)(3,0)
\qbezier(3,0)(5.8,0.2)(6,3)
\qbezier(6,3)(5.8,5.8)(3,6)
\qbezier(3,6)(0.2,5.8)(0,3)
\put(6.63,5.13){\circle{3}}
\put(5.13,5.13){\circle*{0.3}}
\put(5.2,5.2){$\alpha_2$}
\put(5.13,7){\line(0,-1){4}}
\put(3.29,7){\line(1,-1){4}}
\put(3,3){\circle*{0.3}}
\put(3.1,2.3){$O$}
\end{picture}
\end{center}
\caption{If two circles share a point $\alpha_2$ at which they have different tangent lines, then corresponding open discs
have nonempty intersection.}
\end{figure}
Therefore, the open discs by these two circles have nonempty intersection, and
we can choose $\beta_2$ close to $\alpha_2$ satisfying
$$
d_0d_1 - |\beta_2|^2 > 0, \quad F(\alpha_0,\alpha_1,\beta_2)>0, \quad G_2(\beta_2)=0.
$$
Note that the first inequality is possible, because
all the $2\times 2$ principal submatrices of $A_{d,\alpha}$ are positive definite, which implies $d_0d_1 - |\alpha_2|^2 >0$.
Putting $\beta=(\alpha_0,\alpha_1,\beta_2)$,
the conditions in (\ref{qqq}) are satisfied, and we have
$$
\begin{aligned}
\rk\varrho_{\Lambda,\beta}&=2+\rk B_{0,\alpha_0}+\rk B_{1,\alpha_1}+\rk B_{2,\beta_2} = 2+1+1+1=5, \\
\rk\varrho_{\Lambda,\beta}^\Gamma&=3+\rk A_{d,(\alpha_0,\alpha_1,\beta_2)} = 3+ 3 = 6.
\end{aligned}
$$
We can also take $\beta_2'$ close to $\alpha_2$ satisfying
$$
d_0d_1 - |\beta^\prime_2|^2 > 0, \quad F(\alpha_0,\alpha_1,\beta_2')=0, \quad G_2(\beta_2')>0,
$$
and put $\beta=(\alpha_0,\alpha_1,\beta_2')$ to get
$$
\begin{aligned}
\rk\varrho_{\Lambda,\beta}&=2+\rk B_{0,\alpha_0}+\rk B_{1,\alpha_1}+\rk B_{2,\beta_2'} = 2+1+1+2=6, \\
\rk\varrho_{\Lambda,\beta}^\Gamma&=3+\rk A_{d,(\alpha_0,\alpha_1,\beta_2')} = 3+ 2 = 5.
\end{aligned}
$$
In this way, we get PPT edge states of type $(5,6)$ and $(6,5)$ on the boundary of the face $F_\Lambda$.

Now, we consider the circles defined by $F(\alpha_0,z_1,\alpha_2)=0$ and $G_1(z_1)=0$, both of which pass through $z_1=\alpha_1$.
By the same arguments as above, the open discs by these two circles have nonempty intersection.
If $\beta_2'$ was chosen sufficiently close to $\alpha_2$, we may replace $F(\alpha_0,z_1,\alpha_2)=0$ by
$F(\alpha_0,z_1,\beta_2')=0$ while preserving the same properties.
Thus, we can take $\beta_1$ close to $\alpha_1$ satisfying
$$
d_0d_2 - |\beta_1|^2 >0, \quad  F(\alpha_0,\beta_1,\beta_2')>0, \quad G_1(\beta_1)=0.
$$
Then conditions in (\ref{qqq}) are satisfied with $\beta=(\alpha_0,\beta_1,\beta_2')$ and we get
$$
\begin{aligned}
\rk\varrho_{\Lambda,\beta}&=2+\rk B_{0,\alpha_0}+\rk B_{1,\beta_1}+\rk B_{2,\beta_2'} = 2+1+1+2=6, \\
\rk\varrho_{\Lambda,\beta}^\Gamma&=3+\rk A_{d,(\alpha_0,\beta_1,\beta_2')} = 3+ 3 = 6.
\end{aligned}
$$
On the other hand, we take $\beta_1'$ close to $\alpha_1$ satisfying
$$
d_0d_2 - |\beta_1'|^2 >0, \quad F(\alpha_0,\beta_1',\beta_2')=0, \quad G_1(\beta_1')>0,
$$
and put $\beta=(\alpha_0,\beta_1',\beta_2')$.
Then we have
$$
\begin{aligned}
\rk\varrho_{\Lambda,\beta}&=2+\rk B_{0,\alpha_0}+\rk B_{1,\beta_1'}+\rk B_{2,\beta_2'} = 2+1+2+2=7, \\
\rk\varrho_{\Lambda,\beta}^\Gamma&=3+\rk A_{d,(\alpha_0,\beta_1',\beta_2')} = 3+ 2 = 5.
\end{aligned}
$$
If both $\beta_1'$ and $\beta_2'$ are chosen sufficiently close to $\alpha_1$ and $\alpha_2$,
we can also consider the curves $F(z_0,\beta_1',\beta_2')=0$ and $G_0(z_0)=0$.
Then, there exist $\beta=(\beta_0,\beta_1',\beta_2')$ satisfying
$$
\begin{aligned}
\rk\varrho_{\Lambda,\beta}&=2+\rk B_{0,\beta_0}+\rk B_{1,\beta_1'}+\rk B_{2,\beta_2'} = 2+1+2+2=7, \\
\rk\varrho_{\Lambda,\beta}^\Gamma&=3+\rk A_{d,(\beta_0,\beta_1',\beta_2')} = 3+ 3 = 6.
\end{aligned}
$$
and $\beta=(\beta_0',\beta_1',\beta_2')$ satisfying
$$
\begin{aligned}
\rk\varrho_{\Lambda,\beta}&=2+\rk B_{0,\beta_0'}+\rk B_{1,\beta_1'}+\rk B_{2,\beta_2'} = 2+2+2+2=8, \\
\rk\varrho_{\Lambda,\beta}^\Gamma&=3+\rk A_{d,(\beta_0',\beta_1',\beta_2')} = 3+ 2 = 5.
\end{aligned}
$$

In order to get PPT edge states of type $(5,5)$ on the boundary of the face $F_\Lambda$,
we perturb the angular parts of $\alpha_i$ to get $\beta_i$ for $i=0,1,2$,
satisfying the relation $\re(\beta_0\beta_1\beta_2)=\re(\alpha_0\alpha_1\alpha_2)$.
Then we have $F(\beta)=F(\alpha)=0$ and $G_i(\beta_i)=G_i(\alpha_i)=0$, which imply
$$
\begin{aligned}
\rk\varrho_{\Lambda,\beta}&=2+\rk B_{0,\alpha_0}+\rk B_{1,\alpha_1}+\rk B_{2,\alpha_2} = 2+1+1+1=5, \\
\rk\varrho_{\Lambda,\beta}^\Gamma&=3+\rk A_{d,\alpha} = 3+ 2 = 5.
\end{aligned}
$$
Therefore, we get a family of PPT edge states of type $(5,5)$ on the boundary of the face $F_\Lambda$

It remains to find PPT edge states of type $(4,4)$.
If the face $F_\Lambda$ contains $\varrho_{\Lambda,\beta}$ of rank four,
then one of $B_{i,\beta_i}$, say $B_{0,\beta_0}$, must be zero.
This implies  $\alpha_0=0$ by the PPT condition, and we have $r=0$.
Therefore, we see that the face $F_\Lambda$ cannot contain a PPT edge state of type $(4,4)$ with form in (\ref{varrho}).
On the other hand, Choi's original example of PPT entanglement \cite{choi-ppt} is a PPT edge state of type $(4,4)$ with form in (\ref{varrho}).
It is natural to ask if the face $F_\Lambda$ contains any other PPT edge states of type $(4,4)$ on the boundary.

\section{Kernel vector of Schmidt rank two or one}

In this last section, we consider the case when $|\zeta\ran=\sum_{i=0}^2 a_i|i\ran|i\ran$ is of Schmidt rank two or one,
and discuss some problems. We begin with the case
of Schmidt rank two, say, $a_0,a_2>0$ and $a_1=0$.
We begin with the equation (\ref{equation}) to get
$$
c=(ra_0,0,ra_2),\qquad
b=(-sa_0,b_1,s a_2),\qquad
d=(\textstyle\frac{a_2b_1}{a_0}, d_1,\frac{a_0b_1}{a_2})
$$
for $r,s \in \mathbb R$.
We take $b_1=a_0a_2$ to get
$$
A_{d,\alpha}=
\left(\begin{matrix}
a_2^2 & -a_2(s-{\rm i}r) & -a_0a_2\\
-a_2(s+{\rm i}r) & d_1 & a_0(s+{\rm i}r)\\
-a_0a_2 & a_0(s-{\rm i}r) &a_0^2
\end{matrix}\right).
$$
Note that $A_{d,\alpha}$ is positive if and only if
\begin{equation}\label{con1}
r^2+s^2\le d_1.
\end{equation}
When we take $\beta=0$, the state $\varrho$ in (\ref{varrho}) is of PPT if and only if (\ref{con1}) and
\begin{equation}\label{con2}
p_0q_0\ge a_0^2(r^2+s^2),\qquad
p_1q_1\ge a_0^2a_2^2,\qquad
p_2q_2\ge a_2^2(r^2+s^2)
\end{equation}
are satisfied.
If we take $r,s\in \mathbb R$ so that the strict inequality holds in (\ref{con1}) and take $p,q\in\mathbb R^3_+$ so that
the identities hold in (\ref{con2}), then the PPT state $\varrho$ in (\ref{varrho}) is of type (8,6).

Now, we consider the equation (\ref{system}) with $a_1=0$. Then we have the identity (\ref{eq_1}).
We note that $\alpha_0\alpha_1\alpha_2=a_0^2a_2^2(r^2+s^2)$. If we take $q\in\mathbb R^3_+$ so that $q_0q_1q_2\neq a_0^2a_2^2(r^2+s^2)$
then we have either $(x_0,y_0)=(0,0)$, $(x_1,y_1)=(0,0)$ or $(x_2,y_2)=(0,0)$.
In the first and the third cases, the nontrivial solutions are of the form $x=(0,x_1,0)$ and $y=(0,y_1,0)$,
while in the second case they are of the form $x=(x_0,0,x_2)$ and $y=(y_0,0,y_2)$.
Since $\lan0| \lan1|x\ran|y\ran = 0$, such product vectors $x \otimes y$ cannot span the orthogonal complement of $|\zeta\ran$.
Thus, the states $\varrho$ in (\ref{varrho}) violate the range criterion \cite{p-horo-range}, and we conclude that they are PPT entangled states,
even though they are not PPT edge states.
The authors could not determine if there exists a PPT edge state of type $(8,6)$ whose kernel vector has Schmidt rank two.

Now, we turn our attention to the case when the kernel vector is of Schmidt rank one.
More generally, we consider the case of $n\ot n$ system. Recall that the rank of $\varrho^\Gamma$ is at most $n^2-2n+3$
when a PPT edge state $\varrho$ has corank one by \cite{kiem_kye_11}. See also \cite{choi_kiem_kye_2020}.

\begin{proposition}\label{rankoneedge}
Suppose that $D$ is a maximal subspace of $\mathbb C^n\ot\mathbb C^n$ whose orthogonal complement
is spanned by a vector of Schmidt rank one. Then there exists no $n\ot n$ PPT edge state $\varrho$
of type $(n^2-1,n^2-2n+3)$ with $\im\varrho=D$.
\end{proposition}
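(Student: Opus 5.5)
The plan is to show that any PPT state $\varrho$ with $\im\varrho=D$ and $\rk\varrho^\Gamma\ge n^2-2n+3$ fails the edge criterion recalled before (\ref{system}). That criterion says a PPT state is an edge state if and only if there is no product vector $|\xi\ran\ot|\eta\ran\in\im\varrho$ with $|\bar\xi\ran\ot|\eta\ran\in\im\varrho^\Gamma$. So the goal is to produce such a product vector, and the argument is a dimension count on a Segre variety.

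First, normalize. Applying a local unitary, which preserves the PPT property, edge-ness, ranks, and the Schmidt rank of the kernel vector, we may assume $D^\perp=\spa\{|00\ran\}$. Then a product vector $|\xi\ran\ot|\eta\ran$ lies in $D$ exactly when $\xi_0\eta_0=0$. In particular, every $|\xi\ran\ot|\eta\ran$ with $\xi_0=0$ lies in $D$. Next, suppose $\varrho$ is a PPT state of type $(n^2-1,n^2-2n+3)$ with $\im\varrho=D$. Then $\ker\varrho^\Gamma$ has dimension $2n-3$; pick a basis $|\kappa_1\ran,\dots,|\kappa_{2n-3}\ran$. A vector lies in $\im\varrho^\Gamma$ if and only if it is orthogonal to every $|\kappa_j\ran$.

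The key step is to work in the subspace $W=\spa\{|1\ran,\dots,|n-1\ran\}\ot\mathbb C^n$, of dimension $n(n-1)$. Inside its projectivization $\mathbb P(W)\cong\mathbb P^{n^2-n-1}$, the product vectors $|\xi'\ran\ot|\eta\ran$ with $\xi'\in\spa\{|1\ran,\dots,|n-1\ran\}$ form the Segre variety $\mathbb P^{n-2}\times\mathbb P^{n-1}$. This is an irreducible projective variety of dimension $2n-3$. The conditions $\lan\kappa_j|(|\xi'\ran\ot|\eta\ran)=0$ for $j=1,\dots,2n-3$ are linear in the vector $|\xi'\ran\ot|\eta\ran\in W$. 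They therefore cut out a projective linear subspace of $\mathbb P(W)$ of codimension at most $2n-3$. By the projective dimension theorem, a projective variety of dimension $2n-3$ meets every linear subspace of codimension at most $2n-3$. Hence there is a nonzero product vector $|\xi'\ran\ot|\eta\ran$ with $\xi'_0=0$ that is orthogonal to $\ker\varrho^\Gamma$, that is, $|\xi'\ran\ot|\eta\ran\in\im\varrho^\Gamma$.

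Finally, put $|\xi\ran=|\bar\xi'\ran$. Then $\xi_0=0$, so $|\xi\ran\ot|\eta\ran\in D=\im\varrho$, while $|\bar\xi\ran\ot|\eta\ran=|\xi'\ran\ot|\eta\ran\in\im\varrho^\Gamma$. By the criterion, $\varrho$ is not an edge state, which proves Proposition \ref{rankoneedge}. The argument in fact only uses $\rk\varrho^\Gamma\ge n^2-2n+3$.

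The point I expect to need the most care is the passage from the antilinear condition to a linear one. This works because the constraint $\xi_0=0$ defining membership in $D$ is invariant under complex conjugation, so we may solve for $\xi'=\bar\xi$ as an independent complex variable. This is exactly where Schmidt rank one is used: for a kernel vector of higher Schmidt rank, the constraint is not a coordinate condition on one factor alone and is not preserved this way, which is consistent with the type $(8,6)$ examples of Theorem \ref{main}. The rest is the standard intersection-dimension count, for which I would cite the projective dimension theorem.
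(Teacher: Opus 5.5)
Your proof is correct and follows the paper's argument: you normalize to $D^\perp=\spa\{|00\ran\}$, restrict to $\xi$ with $\xi_0=0$, and find a product vector in $\spa\{|1\ran,\dots,|n-1\ran\}\ot\mathbb C^n$ orthogonal to the $2n-3$ kernel vectors of $\varrho^\Gamma$. The only difference is that you get this product vector directly from the projective dimension theorem applied to the Segre variety $\mathbb P^{n-2}\times\mathbb P^{n-1}$, while the paper cites Parthasarathy's bound that any subspace of $\mathbb C^{n-1}\ot\mathbb C^n$ of dimension greater than $(n-2)(n-1)$ contains a product vector, which rests on the same dimension count.
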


\begin{proof}
Assume that there exists such a PPT edge state $\varrho$.
We may suppose that $D^\perp$ is spanned by $|0\ran \otimes |0\ran$ by Schmidt decomposition.
We put $E = \im \varrho^\Gamma$ and write
$E^\perp = {\rm span} \{ v_0, \dots,  v_{(2n-3)-1} \}$. We denote by $\tilde v_k$ the compression of $v_k$ onto $\mathbb C^{n-1} \otimes \mathbb C^n$.
Then we have
$$
\begin{aligned}
\dim {\rm span}\{ \tilde v_0, \dots,  \tilde v_{(2n-3)-1} \}^\perp
& \ge (n-1)n -(2n-3)
&> ((n-1)-1)(n-1).
\end{aligned}
$$
It follows from \cite{part} that ${\rm span}\{ \tilde v_0, \dots,  \tilde v_{(2n-3)-1} \}^\perp$ contains a product vector
$$
|\bar \xi' \ran \otimes |\eta\ran = (\bar x_1, \dots, \bar x_{n-1})^\ttt \otimes (y_0, \dots,  y_{n-1})^\ttt.
$$
We put $|\xi\ran = (0,x_1,\dots,  x_{n-1})^\ttt$.
Then, we have
$$
\lan 0|\lan 0 |\xi\ran |\eta\ran =0, \qquad \text{and} \qquad \lan v_k|\bar\xi \ran |\eta\ran = \lan \tilde v_k|\bar\xi' \ran |\eta\ran = 0, \qquad 0 \le k \le (2n-3)-1,
$$
which shows that $\varrho$ is not an edge state.
\end{proof}

It would be interesting to know if there exists a PPT entangled edge state whose range is a maximal subspace with the orthogonal complement
spanned by a vector of Schmidt rank one.
In the previous work \cite{han_kye_2025_a} by the authors, we showed that the distribution of Schmidt number witnesses outside
of faces ${\mathcal F}_E$ is determined by the \lq entangledness\rq\ properties of the orthogonal complement $E^\perp$.
In particular, we have seen that
$|\zeta\ran\in\mathbb C^n\ot\mathbb C^n$ is of Schmidt rank one if and only if there exists no entanglement
witness outside of the face ${\mathcal F}_{|\zeta\ran^\perp}$. We note that this is the case if and only if the maximal face
${\mathcal F}_{|\zeta\ran^\perp}$ lies on the boundary of convex set $\blockpos$ of all blockpositive matrices with trace one.
See \cite{{kye_comp_tensor},{kye_lec_note}} for exposition on the notions of entanglement witnesses and blockpositivity.
Proposition \ref{rankoneedge} suggests that this property has a close relation with the existence of
PPT edge states of certain types in the interior of ${\mathcal F}_{|\zeta\ran^\perp}$.

We finish this paper to consider general $n\ot n$ cases with $n\ge 4$.
It is not so difficult to follow the construction (\ref{varrho}) of PPT states,
to get PPT states with rank $n^2-1$ for arbitrary $n\ge 4$. In this case,
the rank of $\varrho^\Gamma$ is given by
$$
n+\left(\begin{matrix}n\\ 2\end{matrix}\right)=\frac {n(n+1)}2,
$$
which is strictly less than the largest possible rank $n^2-2n+3$ for $n\ge 4$.
PPT states of type $(n^2-1,n^2-2n+3)$ have been constructed in \cite{choi_kiem_kye_2020} for arbitrary $n\ge 3$,
and some of them were shown to be edge states for $3\le n\le 1000$.
As for PPT states of other types, there are numerical results \cite{LMS_2010a} for the $4\ot 4$ case.
It would be interesting to exhibit analytic examples of $4\ot 4$ PPT edge states of various types, following our methods in the previous section.


\end{document}